\def\d{\mathrm d}
\def\J2{\mathsf J^2}
\newtheorem{theorem}{Theorem}
\newtheorem{lemma}[theorem]{Lemma}
\begin{document}

\title{Hamilton-Jacobi Formalism on Locally Conformally Symplectic Manifolds}

\author{O\u{g}ul Esen$^{\dagger}$, Manuel de Le\'on$^{\ddagger}$, Cristina
Sard\'on$^{*}$,  Marcin Zajac$^{**}$}
\date{}
\maketitle

\centerline{Department of Mathematics$^{\dagger}$,}
\centerline{Gebze Technical University, 41400 Gebze, Kocaeli, Turkey.}
\centerline{oesen@gtu.edu.tr}
\vskip 0.5cm

\centerline{Instituto de Ciencias Matem\'aticas, Campus Cantoblanco$^{\ddagger}$,}
\centerline{Consejo Superior de Investigaciones Cient\'ificas}
\centerline{and}
\centerline{
Real Academia Espa{\~n}ola de las Ciencias.}
\centerline{C/ Nicol\'as Cabrera, 13--15, 28049, Madrid, Spain.}
\centerline{mdeleon@icmat.es}
\vskip 0.5cm

\centerline{Instituto de Ciencias Matem\'aticas, Campus Cantoblanco$^{*}$,}
\centerline{Consejo Superior deInvestigaciones Cient\'ificas.}
\centerline{C/ Nicol\'as Cabrera, 13--15, 28049, Madrid, Spain.}
\centerline{cristinasardon@icmat.es}
\vskip 0.5cm

\centerline{Department of Mathematical Methods in Physics$^{**}$,}
\centerline{Faculty of Physics. University of Warsaw,}
\centerline{ul. Pasteura 5, 02-093 Warsaw, Poland.}
\centerline{
marcin.zajac@fuw.edu.pl}

\begin{abstract}

In this article we provide a Hamilton-Jacobi formalism in locally conformally symplectic manifolds. Our interest in the Hamilton-Jacobi theory comes from the suitability of this theory as an integration method for dynamical systems, whilst our interest in the locally conformal character will account for physical theories described by Hamiltonians defined on well-behaved line bundles, whose dynamic takes place in open subsets of the general manifold. We present a local l.c.s. Hamilton-Jacobi in subsets of the general manifold, and then provide a global view by using the Lichnerowicz-deRham differential. We show a comparison between the global and local description of a l.c.s. Hamilton--Jacobi theory, and how actually the local behavior can be glued to retrieve the global behavior of the Hamilton-Jacobi theory.
\end{abstract}

\newpage 
\tableofcontents

\section{Introduction}

We aim at pursuing physical theories whose Hamiltonians or dynamical variables are
defined in open subsets of a manifold. In those subsets, these systems behave
like symplectic mechanical systems, although the complete global dynamics fails to be symplectic. This phenomenom appears, for example, in some physical theories with nonlocal potentials, or mechanical systems that are defined by parts, each of which behaves differently and accordingly to different laws. 
Systems with such local and global characteristics will be referred to as
dynamical systems on a locally conformally symplectic (l.c.s.) manifolds, from a geometric point of view. Let us explain this local setting more explicitly.

As we know, symplectic geometry is somehow a global thing \cite{AbMa78,Arnold,LeRo89,Salamon}. There exists a Darboux theorem asserting that, locally, two symplectic manifolds can not be distinguished
from one another. Nonetheless, only one of the two conditions
ensuring that a two-form on an even-dimensional manifold is symplectic, is a global property, that is, closure.  This property imposes strong cohomological restrictions on the existence of a symplectic structure on an even-dimensional compact manifold, as it is that all the Betti numbers of even-degree must be non-zero. This is why determining which compact manifolds admit a symplectic structure is still an open problem.
Hence, on a more local note, Hwa-Chung Lee in 1941 \cite{Hwa} reconsidered the general setting of an even-dimensional endowed with a non-degenerate two-form $\omega$. First, he discussed the closed case, i.e., the symplectic case, and then the problem of two two-forms $\omega_1$ and $\omega_2$, which are conformal to one another. 
Later in 1985, Vaisman \cite{vaisman} defined a locally conformally symplectic (l.c.s.) manifold
as an even dimensional manifold endowed with a non-degenerate two-form such that for every point $p\in M$ there is an
open neighborhood $U$ such that
\begin{equation} \label{cond}
d\left(e^{-\sigma}\omega|_{U}\right)=0
\end{equation}
where $\sigma:U\rightarrow \mathbb{R}$ is a smooth function. If this condition
holds for $U=M$, then $(M,\omega)$ turns out to be a globally
conformally symplectic manifold. If \eqref{cond} holds for a constant function $\sigma$, then $(M,\omega)$ becomes a symplectic
manifold. The work of Lee \cite{Hwa} proposes an equivalent definition with the aid of
a compatible one-form, named the Lee form. We will introduce this definition in subsequent sections. 
It is important to notice that at a local scale, a symplectic manifold can not be
distinguished from a l.c.s. manifold. Thus, not only all symplectic
manifolds locally look alike, but potentially there may exist
manifolds which locally look like symplectic manifolds and however fail to do so globally \cite{Banzoni}.

In this paper we propose the resolution of globally nonsymplectic Hamiltonian systems that are defined locally by a symplectic structure by adding a conformal factor to the general Hamiltonian. The Hamiltonian itself without being restricted to a conformal factor, is not a Hamiltonian in the proper sense of symplectic geometry, i.e., there does not exist a symplectic form that is nondegenerate and closed. Instead, there is a nondenegerate two-form and a closed one-form that define a locally conformally symplectic structure. In the following sections, we will describe how this particulary works. Hamilton equations in the locally conformal symplectic scenario were studied by I. Vaisman \cite{vaisman}, and later extended by Chinea, de Le\'on and Marrero \cite{ChLeMa91} for the time-dependent case (cosymplectic framework).
It is our intention to study the dynamics from the local symplectic point of view, and from a global l.c.s. view. The so called Lichnerowicz-deRham differential will be a geometric object that will help us compare the local and global setting \cite{LMP}.

 On the other hand, to study the dynamics, we aim at constructing
a Hamilton-Jacobi theory. Why are we interested in a Hamilton-Jacobi theory? We must say that it constitutes the third
most important theory in classical mechanics, after the Lagrangian or the Newtonian picture. Nonetheless, it is more unfrequent, but the solvability of the Hamilton-Jacobi equation under certain circumstances is very convenient, e.g., in case of separable type potentials. Let us review this briefly. The well-known time-independent Hamilton-Jacobi equation (HJE)
\begin{equation}\label{HJeq1}
H\left(q^i,\frac{\partial W}{\partial q^i}\right)=E
\end{equation}
\noindent
was interpreted geometrically with a primordial observation on a symplectic phase space: if a Hamiltonian
vector field $X_{H}:T^{*}Q\rightarrow TT^{*}Q$ is projected into a vector field $X_H^{dW}:Q\rightarrow TQ$ on a lower
dimensional manifold by means of a one-form $dW$, then the integral curves of the projected
vector field $X_{H}^{dW}$ can be transformed into integral curves of $X_{H}$ provided that $W$ is a solution of \eqref{HJeq1}. If we define the projected vector field as:
\begin{equation}
 X_H^{dW}=T\pi_Q\circ X_H\circ dW,
\end{equation}
where $T{\pi}_Q$ is the induced projection on the tangent space, $T{\pi}_Q:TT^{*}Q\rightarrow T^{*}Q$ by the canonical projection
$\pi_Q:T^{*}Q\rightarrow Q$, it implies the commutativity of the diagram below:

\[
\xymatrix{ T^{*}Q
\ar[dd]^{\pi_Q} \ar[rrr]^{X_H}&   & &TT^{*}Q\ar[dd]^{T\pi_Q}\\
  &  & &\\
 Q\ar@/^2pc/[uu]^{dW}\ar[rrr]^{X_H^{dW}}&  & & TQ}
\]
\noindent
Notice that the image of $dW$ is a Lagrangian submanifold, since it is exact and consequently, closed. Indeed, one can change $dW$ by a closed one-form $\gamma$, since it is equivalent to the image be a Lagrangian submanifold.
Lagrangian submanifolds are very important objects in Hamiltonian mechanics, since the dynamical equation (Hamiltonian or Lagrangian)
can be described as Lagrangian submanifolds of convenient symplectic manifolds.
We enunciate the following theorem \cite{CaGrMaMaMuRo06}.
\begin{theorem}
\label{HJT-pre} For a closed one-form $\gamma=dW$ on $Q$ the following conditions are
equivalent:
\begin{enumerate}
\item The vector fields $X_{H}$ and $X_{H}^{\gamma }$ are $\gamma$-related,
that is
\begin{equation}
T\gamma \circ X_H^{\gamma}=X_H\circ\gamma.
\end{equation}
\item The following equation is fulfilled 
\[
d\left( H\circ \gamma \right)=0.
\]
\end{enumerate}
\end{theorem}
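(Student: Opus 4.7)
The plan is to derive the equivalence from the Lagrangian character of $\gamma(Q)$, cross-checking with a Darboux coordinate computation. First I would observe that since $\gamma$ is closed, the image $\gamma(Q) \subset T^*Q$ is a Lagrangian submanifold; equivalently $\gamma^*\omega_Q = 0$, which follows from $\gamma^*\theta_Q = \gamma$ for the tautological one-form $\theta_Q$. This is the structural fact that drives the whole equivalence.

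For the implication (1) $\Rightarrow$ (2), the $\gamma$-relatedness $T\gamma \circ X_H^\gamma = X_H \circ \gamma$ says exactly that $X_H\circ\gamma$ lies in the image of $T\gamma$, i.e., $X_H$ is tangent to $\gamma(Q)$ along this submanifold. Since $\gamma(Q)$ is Lagrangian, its symplectic annihilator coincides with its own tangent bundle, so $\iota_{X_H}\omega_Q$ vanishes on vectors tangent to $\gamma(Q)$. Pulling back the defining identity $\iota_{X_H}\omega_Q = dH$ along $\gamma$ then yields $\gamma^*(dH) = d(H\circ\gamma) = 0$.

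For the converse (2) $\Rightarrow$ (1), if $d(H\circ\gamma) = 0$ then $dH$, and hence $\iota_{X_H}\omega_Q$, annihilates every tangent vector to $\gamma(Q)$; invoking the Lagrangian property once more, this forces $X_H|_{\gamma(Q)}$ itself to be tangent to $\gamma(Q)$, so at each $q$ there exists $v \in T_qQ$ with $X_H(\gamma(q)) = T\gamma(v)$. Applying $T\pi_Q$ to both sides and using $T\pi_Q \circ T\gamma = \mathrm{id}_{TQ}$ (because $\pi_Q \circ \gamma = \mathrm{id}_Q$) identifies $v = X_H^\gamma(q)$, which recovers (1).

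The one step that needs care is the Lagrangian annihilator argument, so I would reinforce it with a quick coordinate check: in canonical Darboux coordinates the $\gamma$-relatedness reads $(\partial H/\partial p_j)(\partial \gamma_i/\partial q^j) = -\partial H/\partial q^i$ along $\gamma$, whereas $d(H\circ\gamma)=0$ reads $(\partial H/\partial p_j)(\partial \gamma_j/\partial q^i) = -\partial H/\partial q^i$. These two identities coincide precisely because $d\gamma = 0$ gives $\partial \gamma_i/\partial q^j = \partial \gamma_j/\partial q^i$, which isolates exactly where the closedness hypothesis enters and confirms that the theorem would fail without it.
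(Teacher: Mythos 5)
Your proposal is correct, and it follows essentially the same route the paper takes when proving the l.c.s.\ analogue (Theorem \ref{HJT}): closedness of $\gamma$ makes $\operatorname{Im}\gamma$ Lagrangian, the direction $(1)\Rightarrow(2)$ pulls back $\iota_{X_H}\omega_Q=dH$ along $\gamma$, and the direction $(2)\Rightarrow(1)$ uses $T(\operatorname{Im}\gamma)^{\bot}=T(\operatorname{Im}\gamma)$ together with $T\pi_Q\circ T\gamma=\mathrm{id}_{TQ}$ to identify the preimage. The only organizational difference is that the paper argues via the difference field $D=X_h-\gamma_*X_h^{\gamma}$ and the splitting $T_{\gamma(q)}T^*Q=T_{\gamma(q)}(\operatorname{Im}\gamma)\oplus \mathrm{V}_{\gamma(q)}\pi_Q$, whereas you conclude tangency of $X_H$ directly; your Darboux coordinate cross-check is a harmless (and instructive) addition.
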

\noindent
The first item in the theorem says that if $\left( q^i\left(
t\right) \right) $ is an integral curve of $X_{H}^{\gamma }$, then $\left(
q^i\left( t\right) ,\gamma_j \left( q\left( t\right) \right) \right) $ is an
integral curve of the Hamiltonian vector field $X_{H}$, hence a solution of
the Hamilton equations \eqref{HJeq1}. Here it is assumed that $\gamma=\gamma_i(q)dq^i$. 
In other words, the solution in the complete manifold can be obtained from a solution
defined on a proper submanifold of the initial. The solution in the former initial space is retrieved by composing with a section of the bundle defined between the initial manifold, the submanifold and the projection. In the local picture, the second condition implies that exterior
derivative of the Hamiltonian function on the image of $\gamma $ is closed,
that is, $H\circ \gamma $ is constant. Under the assumption that $\gamma$ is closed, we can find (at least
locally) a function $W$ on $Q$ satisfying $dW=\gamma $. In this case, one arrives at the classical realization of the theory in (\ref{HJeq1}).

This is exactly the procedure that we will use to obtain a Hamilton-Jacobi equation on a l.c.s. manifold. For it, we will need to redefine the concept of Lagrangian submanifold in l.c.s. manifolds. Also, the dynamics, as mentioned, will take place in two different scenarios: a local picture described in terms of local subsets of $T^{*}Q$, and a global picture with the aid of the Lichnerowicz-deRham differential. So, here is the planning of the paper. In Section 2, we review the geometric fundamentals of l.c.s. manifolds. We recall the musical mappings, the Lichnerowicz-deRham differential, and construct l.c.s. structures on a cotangent bundle. We introduce the concept of Lagrangian submanifolds on l.c.s. structures to start describing the dynamics on l.c.s. cotangent bundles.
Once the dynamic is described, we propose a Hamilton--Jacobi theory for l.c.s. manifolds on Section 3: first locally, and then globally. Further, we make a comparison between the local and the global Hamilton--Jacobi theories, and see how one retrieves the other by local and global considerations. To shed some light on the theory, Section 4 describes a physical example for which the dynamics is retrieved using both the local and the global setting of l.c.s. manifolds. To close the paper, Section 5 describes the possibility of finding complete solutions to the Hamilton--Jacobi equation on a l.c.s. manifold.

\section{Geometry of locally conformally symplectic manifolds}

\subsection{Basics of locally conformally symplectic manifolds} \label{blcs}

The pair $(M,\omega)$ where $\omega$ is a non-degenerate two-form is called an almost symplectic manifold, and $\omega$ is an almost symplectic two-form. If, $\omega$ is additionally closed, then the manifold turns out to be a symplectic manifold. There is an intermediate step between symplectic manifolds and almost symplectic manifolds: these are the so called locally conformally symplectic (l.c.s.) manifolds \cite{vaisman}. An almost symplectic manifold $(M,\omega)$ is a l.c.s. manifold if the two-form is closed locally up to a conformal parameter, i.e., if there exists an open neighborhood, say $U_\alpha$, around each point $x$ in $M$, and a function $\sigma_{\alpha}$ such that the exterior derivative $d(e^{-\sigma_{\alpha}}\omega\vert_{\alpha})$ vanishes identically on $U_\alpha$. Here, $\omega\vert_{\alpha}$ denotes the restriction of the almost symplectic structure $\omega$ to the open set $U_\alpha$. The positive character of the exponential function implies that the local two-form $e^{-\sigma_{\alpha}}\omega\vert_{\alpha}$ is non-degenerate as well. Being a closed and non-degenerate,
\begin{equation}\label{ua}
    \omega_{\alpha}=e^{-\sigma_{\alpha}}\omega\vert_{\alpha}
\end{equation}
is a symplectic two-form. That is, the pair $(U_\alpha, \omega_{\alpha})$ is a symplectic manifold. 

The question now is how to glue the behavior in all local open charts to arrive at a global definition for l.c.s. manifolds. Notice that, $\omega\vert_{\alpha}$ is a local realization of the global two-form $\omega$ whereas, up to now, $\omega_{\alpha}$ is defined only on $U_\alpha$. In another local chart, say $U_{\beta}$, a local symplectic two-form is defined to be $\omega_{\beta}:=e^{-\sigma_{\beta}}\omega\vert_{\beta}$. This gives that, for overlapping charts, the local symplectic two-forms are related by $\omega_{\beta}=e^{-(\sigma_{\beta}-\sigma_{\alpha})}\omega_{\alpha}$. Accordingly, this conformal relation determines scalars
\begin{equation} \label{transition}
\lambda_{\beta \alpha}=e^{\sigma_{\alpha}}/e^{\sigma_{\beta}}=e^{-(\sigma_{\beta}-\sigma_{\alpha})}
\end{equation}
 satisfying the cocycle condition 
\begin{equation}\label{cocycle}
    \lambda_{\beta \alpha}\lambda_{\alpha \gamma}=\lambda_{\beta \gamma}.
\end{equation}
This way one can glue the local symplectic two-forms $\omega_\alpha$ to a line bundle $L \mapsto M$ valued two-form $\tilde{\omega}$ on $M$. To sum up, we say that there are two global two-forms $\omega$ (real valued) and $\tilde{\omega}$ (line bundle valued) on $M$ with local realizations $\omega\vert_\alpha$ and $\omega_\alpha$, respectively. These local two-forms are related as in (\ref{ua}).    

Now, recalling \eqref{ua} once more, it is easy to see that $d\omega\vert_{\alpha}=d\sigma_{\alpha}\wedge \omega\vert_{\alpha}$, but equally,
$d\omega\vert_{\beta}=d\sigma_{\beta}\wedge \omega\vert_{\beta}$ on an overlapping region $U_\alpha \cap U_\beta$. This implies that $$d(\sigma_{\beta}-\sigma_{\alpha})\wedge \omega|_{U_{\alpha}\cap U_{\beta}}=0,$$ and since $\omega$ is nondegenerate, necessarily, $d\sigma_{\alpha}=d\sigma_{\beta}$. So that, $\theta=d\sigma_{\alpha}$ is a well defined one-form on $M$ that satisfies $d\omega=\theta\wedge \omega$. Such a one-form $\theta$ is called the Lee one-form \cite{Hwa}. Since $\theta$ is locally exact, then it is closed.  A l.c.s. manifold $(M,\omega,\theta)$ is 
a globally conformally symplectic (g.c.s.) manifold if the Lee form $\theta$ is an exact one-form. Since it is fulfilled that in two-dimensional manifolds every closed form is exact, two-dimensional l.c.s. manifolds are g.c.s. manifolds. 
Notice that the Lee form $\theta$ is completely determined by $\omega$ for manifolds with dimension $4$ and more. We can likewise denote a l.c.s. manifold by a triple $(M,\omega,\theta)$. Equivalently, this realization of locally conformally symplectic manifolds reads that a l.c.s. manifold is a symplectic manifold if and only if the Lee form $\theta$ vanishes identically. 
 Conversely, if $(M, \omega, \theta)$ is a triple such that $\omega$ is an almost symplectic form, and $\theta$ is a closed one-form such that $d\omega = \theta \wedge \omega$, then one can find an open cover $\{U_\alpha\}$ of $M$ such that, on each chart $U_\alpha$, $\theta = d \sigma_\alpha$ for some functions $\sigma_\alpha$. It is clear now that $e^{-\sigma_\alpha} \omega\vert_\alpha$ is symplectic on $U_\alpha$.\bigskip

\noindent 

\noindent \textbf{Musical mappings.} Consider an almost symplectic manifold $(M,\omega)$. The non-degeneracy of the two-form $\omega$ leads us to define a musical isomorphism 
\begin{equation} \label{mus-iso}
\omega^\flat:\mathfrak{X}(M)\longrightarrow \Omega^1(M): X \mapsto \iota_X\omega,
\end{equation}
where $\iota_X$ is the interior derivative. Here, $\mathfrak{X}(M)$ is the space of vector fields on $M$ whereas $\Omega^1(M)$ is the space of one-form sections on $M$. 
We denote the inverse of the isomorphism (\ref{mus-iso}) by $\omega^\sharp$. When pointwise evaluated, the musical mappings $\omega^\flat$ and $\omega^\sharp$ induce isomorphisms from $TM$ to $T^*M$, and from $T^*M$ to $TM$, respectively. We shall use the same notation for the induced isomorphisms. 

Let us now concentrate on the particular case of the l.c.s. manifolds.  Assume a l.c.s. manifold $(M, \omega)$ with a Lee form $\theta$. 
Referring to the $\omega^\sharp$, we define a (so called Lee) vector field 
\begin{equation} \label{Lee-v-f}
Z_\theta:=\omega^\sharp(\theta), \qquad \iota_{Z_\theta}\omega=\theta
\end{equation}
where $\theta$ is the Lee-form. By applying $\iota_{Z_\theta}$ to the both hand side of the second equation, one obtains that $\iota_{Z_\theta}\theta=0$. Further, by a direct calculation, we see that $\mathcal{L}_{Z_\theta}\theta=0$ and that $\mathcal{L}_{Z_\theta}\omega=0$. Here, $\mathcal{L}_{Z_\theta}$ is the Lie derivative.
\bigskip
 
\noindent \textbf{The Lichnerowicz-deRham differential.}
Consider now an arbitrary manifold $M$ equipped with a closed one-form $\theta$.  The Lichnerowicz-deRham differential (LdR) on the space of differential forms $\Omega(M)$ is defined as
\begin{equation} \label{LdR-Diff}
d_\theta: \Omega^k(M) \rightarrow \Omega^{k+1}(M) : \beta \mapsto d\beta-\theta\wedge\beta,
\end{equation}
where $d$ denotes the exterior (deRham) derivative \cite{GuLi84}. Notice that $d_\theta$ is a differential operator of order $1$. That is, if $\beta$ is a $k$-form then $d_\theta\beta$ is $k+1$-form. The closure of the one-form $\theta$ reads that $d_{\theta}^2=0$. This allows the definition of  cohomology as the $d_\theta$ cohomology in $\Omega(M)$
\cite{HaRy99}. We represent this by the pair $(\Omega(M),d_{\theta})$. 
A direct computation shows that an almost symplectic manifold $(M, \omega)$ equipped with a closed one-form $\theta$ is a l.c.s. manifold if and only if $d_\theta \omega = 0$. Our point is to use the LdR to establish a HJ theory on l.c.s. cotangent bundles. 
\bigskip

\noindent \textbf{{Lagrangian Submanifolds of l.c.s. manifolds.} }
Consider an almost symplectic manifold $(M, \omega)$. Let $L$ be a submanifold of $M$.
The complement $TL^{\bot}$ is defined with respect to $\omega$. For a point $x\in L$,
\begin{equation}\label{lagsub}
    T_{x}L^{\bot}=\{u\in T_xM \enskip|\enskip \omega(u,w)=0, \forall w\in T_xL\}.
\end{equation}
\noindent 
We say that $L$ is isotropic if $TL\subset TL^{\bot}$, it is coisotropic if $TL^{\bot}\subset TL$ and it is Lagrangian if $TL^{\bot}=TL$. Accordingly a submanifold is Lagrangian if it is both isotropic and coisotropic. Observe that the definition and result are exactly the same that in the symplectic case, since they are obtained at the linear level.

\subsection{Locally conformally symplectic structures on cotangent bundles.} \label{Sec-lcs-cot}
Since we are interested in constructing a HJ theory for mechanics,  we shall depict the framework on the cotangent bundles \cite{Banyaga,ChMu17,HaRy99,OtSt15}. Start with the canonical symplectic manifold $(T^*Q,\Omega_Q)$. Here, the canonical symplectic two-form $\Omega_Q = - d \Theta_Q$ is minus of the exterior derivative of the canonical Liouville one-form $\Theta_Q$  on $T^*Q$. Let $\vartheta$  be a closed one-form on the base manifold $Q$ and pull it back to $T^*Q$ by means of the cotangent bundle projection $\pi_Q$. This gives us a closed semi-basic one-form $\theta=\pi_Q^*(\vartheta)$. By means of the Lichnerowicz-deRham differential, we define a two-form 
\begin{equation} \label{omega_theta}
\Omega_\theta=-d_\theta(\Theta_Q)= -d\Theta_Q+\theta\wedge \Theta_Q=\Omega_Q+\theta\wedge \Theta_Q
\end{equation}
on the cotangent bundle $T^*Q$. Since $d\Omega_\theta=\theta\wedge \Omega_\theta$ holds, the triple 
\begin{equation} \label{T*_Q}
T^*_\theta Q=(T^*Q,\Omega_\theta,\theta)
\end{equation} 
determines a locally conformal symplectic manifold with the Lee-form $\theta$. In short, we denote this l.c.s. manifold by simply $T^*_\theta Q$. This structure is
conformally equivalent to a symplectic manifold if and only if $\vartheta$ lies in the zeroth class of the first deRham cohomology on $Q$. Notice that $T^*_\theta Q$ is an exact locally conformal symplectic manifold since $\Omega_\theta$ is defined to be minus of the Lichnerowicz-deRham differential $d_\theta$ of the canonical one-form $\Theta_Q$. It is important to note that all l.c.s. manifolds locally look like $T^*_\theta Q$ for some $Q$ and for a closed one-form $\vartheta$  \cite{ChMu17,OtSt15}. 
\bigskip

\noindent 
Consider the l.c.s. manifold $T_\theta^*Q$ in (\ref{T*_Q}) with Lee form $\theta=\pi_Q^*\vartheta$. Let $\gamma$ be a section of the cotangent bundle or, in other words, a one-form on $Q$. A direct computation shows that the pull-back of the l.c.s. structure is $d_\theta$ exact, that is 
\begin{equation} \label{LagSubT*Q}
\gamma^* \Omega_\theta = - d_\vartheta \gamma
\end{equation} 
where $d_\vartheta$ denotes the LdR differential defined by the one-form $\vartheta$ on $Q$. This implies that the image space of $\gamma$ is a Lagrangian submanifold of $T_\theta^*Q$ if and only if $d_\vartheta \gamma=0$. Since $d_\vartheta^2$ is identically zero, the image space of the one-form $d_\vartheta f$ is a
 Lagrangian submanifold of $T^*_\theta Q$ for some function $f$ defined on $Q$. 
\subsection{Dynamics on locally conformally symplectic manifolds }

Let us now concentrate on the Hamiltonian dynamics on l.c.s. manifolds \cite{vaisman13}.  As discussed previously, there are two equivalent definitions of l.c.s. manifolds. One is local, and the other is global. First consider the local definition by recalling the local symplectic manifold $(U_\alpha,\omega_\alpha)$. For a Hamiltonian function $h_\alpha$ on this chart, we write the Hamilton equations by  
\begin{equation}\label{geohamalpha}
    \iota_{X_{\alpha}}\omega_{\alpha}=dh_{\alpha}.
\end{equation}
Here, $X_{\alpha}$ is the local Hamiltonian function associated to this framework. In terms of the Darboux' coordinates $(q^i_{(\alpha)},p_i^{(\alpha)})$ on $U_\alpha$. The local symplectic two-form is $\omega_{\alpha}=dq^i_{(\alpha)}\wedge dp_{i}^{(\alpha)}$, and the Hamilton equation (\ref{geohamalpha}) becomes  
\begin{equation} \label{HamEqLoc}
    \frac{dq^i_{\alpha}}{dt}=\frac{\partial h_{\alpha}}{\partial p_i^{\alpha}},\qquad \frac{dp_i^{\alpha}}{dt}=-\frac{\partial h_{\alpha}}{\partial q_{\alpha}^i}.
\end{equation}

We have discussed the gluing problem of the local symplectic manifolds but we have not addressed this problem for the local Hamiltonian functions. We wish to define a global realization of the local Hamiltonian functions in such a way that the structure of the local Hamilton equations (\ref{HamEqLoc}) does not change under transformations of coordinates. This can be rephrased as to establish a global realization of the local Hamiltonian function $h_\alpha$ by preserving the local Hamiltonian vector fields $X_\alpha$ in (\ref{geohamalpha}). A direct observation reads that multiplying both hand sides of (\ref{geohamalpha}) by the scalars $\lambda_{\beta\alpha}$ defined in (\ref{transition}) leaves the dynamics invariant. So that, the transition 
$h_{\beta}=e^{\sigma_{\alpha}-\sigma_{\beta}}h_{\alpha}$ is needed for the preservation of the structure of the equations. In the light of the cocyle character of the scalars shown in (\ref{cocycle}), we can glue the local Hamiltonian functions $h_\alpha$ to define a section $\tilde{h}$ of the line bundle $L\mapsto M$. On the other hand, in the light of the identity $e^{\sigma_{\alpha}}h_{\alpha}=e^{\sigma_{\beta}}h_{\beta}$ one arrives at a real valued Hamiltonian function 
\begin{equation} \label{glueHamFunc}
h\vert_\alpha=e^{\sigma_\alpha}h_\alpha.
\end{equation}
on $U_\alpha$ that defines a real valued function $h$ on the whole $M$. Recall the discussion in Subsection \ref{blcs} about the local and global character of the two-forms. Similarly, we argue that there exist two global functions $\tilde{h}$ (line bundle valued) and $h$ (real valued) on the manifold $M$. On a chart $U_\alpha$, these functions reduce to $h_\alpha$ and $h\vert_\alpha$, respectively, and they satisfy the relation \eqref{glueHamFunc}.

In order to recast the global picture of the Hamilton equation (\ref{geohamalpha}), we first substitute the identity (\ref{glueHamFunc}) into \eqref{geohamalpha}. Hence, a direct calculation turns the Hamilton's equations into the following form
\begin{equation} \label{LocHam2}
\iota_{X_\alpha} \omega\vert_\alpha =dh\vert_\alpha-h\vert_\alpha d\sigma_\alpha
\end{equation}
where we have employed the identity (\ref{ua}) on the left hand side of this equation. Notice that all the terms in equation (\ref{LocHam2}) have global realizations. So, we can write
\begin{equation}\label{semiglobal}
   \iota_{X_{h}}\omega=dh-h\theta,
\end{equation}
where $X_{h}$ is the vector field obtained by gluing all the vector fields $X_\alpha$. That is, we have $X_h\vert_\alpha=X_\alpha$. Notice that \eqref{semiglobal} can also be written as 
\begin{equation}\label{semiglobal3}
\iota_{X_h} \omega = d_\theta h,
\end{equation}
where $d_\theta$ is the Lichnerowicz-deRham differential given in (\ref{LdR-Diff}). The vector field $X_h$ defined in (\ref{semiglobal3}) is called as Hamiltonian vector field for the Hamiltonian function $h$. In terms of the Lee vector field $Z_\theta$ defined in (\ref{Lee-v-f}), the Hamiltonian vector field is computed to be 
\begin{equation}\label{vflcs}
X_h=\omega^\sharp(dh)+hZ_\theta,
\end{equation}
where $\omega^\sharp$ is the musical isomorphism induced by the almost symplectic two-form $\omega$. From this, one can easily see that, apart from the classical symplectic framework, for the constant function $h = 1$ the corresponding Hamiltonian vector field is not zero but the Lee vector in (\ref{Lee-v-f}), that is $Z_\theta=X_1$. More generally, a vector field $X$ is called a locally Hamiltonian vector field if 
\begin{equation}
d _ { \theta } ( \iota_{X}\omega)=0.
\end{equation}
It is immediate to see that a Hamiltonian vector field is locally Hamiltonian vector field since $d^2_\theta =0$.  

\subsection{Jacobi structures and l.c.s. manifolds} \label{Jacobi}

Let $M$ be a manifold, and $\{\bullet,\bullet\}$ be a bracket on the space of smooth functions on $M$. Assume that, this bracket is skew-symmetric and satisfies the Jacobi identity \cite{Marle91} and it is local, i.e., the support of the bracket of two function lies in the intersection of the supports of those functions.  Now, a manifold equipped with such bracket is a Jacobi manifold \cite{Marle91}. 

Consider a manifold $M$ with a bivector field $\Lambda$ and a vector field $E$ satisfying the conditions
\begin{equation} \label{JB-CC} 
[\Lambda,\Lambda]=2E\wedge \Lambda, \qquad [\Lambda,E]=0
\end{equation}
where the brackets are the Schouten-Nijenhuis brackets. Then, the bracket 
\begin{equation} \label{JB}
\{f,g\}=\Lambda(df,dg)+fE(g)-gE(f)
\end{equation} 
turns out to be a Jacobi bracket. The inverse of this assertion is also true; that is, if a Jacobi bracket is defined through the pair $\Lambda$ and $E$ as in \eqref{JB}, then both conditions given in \eqref{JB-CC} must be satisfied in order to preserve the Jacobi identity. Notice that a Jacobi bracket does not necessarily satisfy the Leibniz rule. If it satisfies the Leibniz identity, it turns into a Poisson bracket \cite{vaisman-book}. A good example of a Jacobi manifold which is not a Poisson manifold is precisely a l.c.s. manifold. Let us exhibit this in more detail.
 
Let us consider now a l.c.s. manifold $(M,\omega)$ with a Lee form $\theta$. Whether referring to the almost symplectic two-form $\omega$, or to the local symplectic two-forms $\omega_\alpha$ defined on the local charts, it is possible to show that $M$ is a Jacobi manifold. Let us start with the global definition of l.c.s. manifolds and, referring to the musical mapping (\ref{mus-iso}), define a bivector $\Lambda$ on $M$.
For two elements $\mu$ and $\nu$ of $\Omega^1(M)$, we construct the bivector as follows:
\begin{equation} \label{gamma}
\Lambda(\mu,\nu)=\omega(\omega^\sharp(\mu), \omega^\sharp(\nu)).
\end{equation} 
By direct calculation, one can show that the pair $(\Lambda,Z_\theta)$ satisfies the conditions in (\ref{JB-CC}) assuming $\Lambda$ as in (\ref{gamma}), and the vector field $E$ is assumed to be the Lee vector field $Z_\theta$ in (\ref{Lee-v-f}), see \cite{LeSa17}.  Accordingly, Jacobi bracket of two functions is determined through
\begin{equation}
\{f,g\}=\Lambda(df,dg)+fZ_\theta(g)-gZ_\theta(f)=\omega(X_f,X_g).
\end{equation} 
Notice that, $X_f$ and $X_g$ are the Hamiltonian vector fields in the form of (\ref{vflcs}). 
 
Let us now describe the local picture. Consider the local symplectic manifold $(U_\alpha,\omega_\alpha)$ and two local functions $f_\alpha$ and $g_\alpha$. Then, we define the local bracket
\begin{equation} \label{locJac}
 \{f\vert_\alpha,g\vert_\alpha\}=e^{-\sigma_{\alpha}} \{e^{\sigma_\alpha} f_{\alpha},e^{\sigma_\alpha}g_{\alpha}\},
\end{equation} 
where the bracket on the right hand side is the canonical Poisson bracket defined by means of the local symplectic two-form $\omega_\alpha$. The functions inside the bracket (\ref{locJac}) are the local realizations $f\vert_\alpha=e^{\sigma_{\alpha}}f_\alpha$ and $g\vert_\alpha=e^{\sigma_{\alpha}}g_\alpha$ of two global function $f$ and $g$, respectively. 
 
The following lemma will show that the notion of Lagrangian submanifold on a l.c.s. manifold concides with the notion of Lagrangian submanifolds of a Jacobi manifold \cite{LiMa-book,vaisman}.

\begin{lemma}
 It is fulfilled
 \begin{equation}
     TL^{\bot}=\sharp_{\Lambda}(TL^{\circ}),
 \end{equation}
 where $TL^{\bot}$ is given in \eqref{lagsub} and $\sharp_{\Lambda}$ is the musical mapping defined to be $\sharp_{\Lambda}(\mu)(\nu)=\Lambda(\mu,\nu)$.
 \end{lemma}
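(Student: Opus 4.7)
The plan is to reduce the statement to a linear‐algebraic identity on each tangent space and to trace through the definitions of $\omega^\sharp$ and $\sharp_\Lambda$ so that everything becomes a direct consequence of the non‐degeneracy of $\omega$.

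First I would unpack the musical mapping $\sharp_\Lambda$ associated to the bivector $\Lambda$ defined in \eqref{gamma}. For $\mu,\nu \in T_x^*M$, using the identity $\nu(Y) = \omega(\omega^\sharp(\nu),Y)$ (which comes from $\iota_{\omega^\sharp(\nu)}\omega = \nu$), one computes
\begin{equation}
\sharp_\Lambda(\mu)(\nu) \;=\; \Lambda(\mu,\nu) \;=\; \omega(\omega^\sharp(\mu),\omega^\sharp(\nu)) \;=\; -\,\omega(\omega^\sharp(\nu),\omega^\sharp(\mu)) \;=\; -\,\nu(\omega^\sharp(\mu)),
\end{equation}
so $\sharp_\Lambda = -\omega^\sharp$ as maps $T^*M \to TM$. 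In particular $\sharp_\Lambda$ is a linear isomorphism on each fibre, which is the key ingredient that lets me convert set inclusions into equalities.

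Next I would establish the inclusion $\sharp_\Lambda(TL^\circ) \subseteq TL^\bot$. Pick $\mu \in T_xL^\circ$, so $\mu(w) = 0$ for every $w \in T_xL$. Then by the computation above,
\begin{equation}
\omega\bigl(\sharp_\Lambda(\mu),w\bigr) \;=\; -\,\omega\bigl(\omega^\sharp(\mu),w\bigr) \;=\; -\,\mu(w) \;=\; 0,
\end{equation}
for all such $w$, so $\sharp_\Lambda(\mu) \in T_xL^\bot$ by the definition \eqref{lagsub}.

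For the reverse inclusion $TL^\bot \subseteq \sharp_\Lambda(TL^\circ)$, take $u \in T_xL^\bot$ and set $\mu := -\omega^\flat(u) \in T_x^*M$. For any $w \in T_xL$, $\mu(w) = -\iota_u\omega(w) = -\omega(u,w) = 0$, whence $\mu \in T_xL^\circ$. Moreover $\sharp_\Lambda(\mu) = -\omega^\sharp(\mu) = \omega^\sharp(\omega^\flat(u)) = u$, so $u$ lies in $\sharp_\Lambda(T_xL^\circ)$. Combining the two inclusions gives the desired equality pointwise, hence as subbundles along $L$.

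The argument is essentially formal once one knows $\sharp_\Lambda$ and $\omega^\sharp$ agree up to a sign, so I do not expect a serious obstacle; the only thing to be careful about is the sign convention in the definition of $\Lambda$ and of $\sharp_\Lambda$, because the Lee vector field $Z_\theta$ plays no role here (the Jacobi $\sharp$ associated with a Jacobi pair $(\Lambda,E)$ is purely the bivector part on covectors, and $E$ only intervenes through the Jacobi bracket on functions). Thus the proof is really a straightforward transcription of the symplectic‐case argument, with $\omega$ replaced by the almost symplectic form on the l.c.s.\ manifold.
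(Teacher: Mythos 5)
Your proof is correct and follows essentially the same route as the paper: both reduce the claim to the identity $\sharp_\Lambda=-\omega^\sharp$ and then verify the two inclusions pointwise using non-degeneracy. The only difference is that you explicitly derive $\sharp_\Lambda=-\omega^\sharp$ from the definition of $\Lambda$, whereas the paper simply asserts it; this is a welcome addition, not a divergence.
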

  \begin{proof}
It directly follows from the identity $\sharp_\Lambda = - \omega^\sharp
$. To see this more explicitly, let $\mu$ be an element of $TL^{\circ}$ and consider $\sharp_{\Lambda}(\mu)$ in $\sharp_{\Lambda}(TL^{\circ})$. For an arbitrary element $w$ in $TL$, we perform the following calculation
\begin{equation}
\omega(\sharp_{\Lambda}(\mu),w)
=
-\omega(\omega^\sharp(\mu),w)
= - \langle\omega^\flat(\omega^\sharp(\mu)),w\rangle=
-\langle \mu, w \rangle = 0
\end{equation}
showing that $\sharp_{\Lambda}(\mu)$ is an element of $TL^{\bot}$. 
Reciprocally, let $v$ be an element in $TL^{\bot}$. Non-degeneracy of $\sharp_{\Lambda}$ enables us to determine a unique $\mu$ in the cotangent bundle so that $v=\sharp_{\Lambda}(\mu)$. Alternatively, we can write this correspondence as $\mu=-\omega^\flat(v)$. Let us prove that $\mu$ is an element of $TL^\circ$. See that, for $w$ in $TL$, we have
\begin{equation}
\langle \mu, w \rangle= -\langle \omega^\flat(v), w \rangle=-\omega(v,w)=0.
\end{equation}
\end{proof}
\section{Hamilton-Jacobi theory on l.c.s. manifolds}\label{HJlcs2}

Here, we will present a Hamilton-Jacobi theory for the case of l.c.s. cotangent bundles, since we are interested in its applications in mechanics. First, we will start by proposing a Hamilton-Jacobi theory locally, that is, in the defined subsets $U_{\alpha}$. After, we will
present a global approach in terms of the Lichnerowicz-deRham differential. Then we close this section by a comparison of the local and global theories. 

\subsection{The local picture of a l.c.s. HJ theory}

Our interest focuses on the l.c.s. cotangent bundles preented in Subsection \ref{Sec-lcs-cot}. Accordingly, we start with the l.c.s. manifold $T^*_\theta Q$ given in (\ref{T*_Q}). Consider an open covering $\{V_\alpha\}$ on the base manifold $Q$ so that, on each chart $V_\alpha$, the closed one-form $\vartheta$ can be written as $\vartheta = d\mu_\alpha$ for a real valued function $\mu_\alpha$. Pull each open set $V_\alpha$ back to the cotangent bundle $T^*Q$ by means of the canonical projection $\pi_Q$ in order to arrive at an open covering $\{U_\alpha\}:=\{\pi_Q^{*}(V_\alpha)\}$ of $T^*Q$. In each chart $U_\alpha$, function $\sigma_\alpha$ determining the conformal character of the almost symplectic structure turns out to be $\sigma_\alpha=\mu_\alpha\circ \pi_Q$. The natural bundle coordinates $(q^i_{(\alpha)},p_i^{(\alpha)})$ on $U_\alpha$ are the Darboux' coordinates for the symplectic form $\omega_\alpha$, that is 
\begin{equation}
\omega_\alpha=dq^i_{(\alpha)}\wedge dp_i^{(\alpha)}. 
\end{equation}
We denote the restriction of the cotangent bundle projection $\pi_Q$ to a chart $U_\alpha$ by $\pi_\alpha$. Then, the fibration $\pi_\alpha$ turns out to be the projection to the first factor that is
\begin{equation}
\pi_\alpha: U_\alpha  \mapsto V_\alpha:(q^i_{(\alpha)},p_i^{(\alpha)})\longrightarrow (q^i_{(\alpha)}). 
\end{equation}
So that the restriction of a section $\gamma$ of $\pi_Q$ is given by
\begin{equation}
\gamma_\alpha: V_\alpha \mapsto U_\alpha: (q^i_{(\alpha)}) \longrightarrow (q^i_{(\alpha)},(\gamma_\alpha)_i),
\end{equation}
where $(\gamma_\alpha)_i$'s are real valued functions on $V_\alpha$. 
 
Consider a Hamiltonian vector field $X_h$ on $T^*_\theta Q$ determined through the Hamilton's equation (\ref{semiglobal3}). On each local chart $U_\alpha$, we have local vector fields $X_\alpha$ satisfying the local identity (\ref{geohamalpha}) for Hamiltonian functions $h_\alpha$. 
 Accordingly, we define a vector field on $V_\alpha$ as follows
\begin{equation} \label{X_H-gamma-loc}
 X_{\alpha}^{\gamma_{\alpha}}=T\pi_\alpha\circ X_{\alpha}\circ\gamma_{\alpha},
\end{equation} 
where $T\pi_\alpha$ is the tangent mapping from $TU_\alpha$ to $TV_\alpha$. 
The following diagram summarizes this discussion. 
\begin{equation} \label{diag-1}
\xymatrix{ U_\alpha 
\ar[dd]^{\pi_{\alpha}} \ar[rrr]^{X_{{\alpha}}}&   & &TU_{\alpha}\ar[dd]^{T\pi_{\alpha}}\\
  &  & &\\
V_\alpha \ar@/^2pc/[uu]^{\gamma_{\alpha}}\ar[rrr]^{X_{{\alpha}}^{\gamma_{\alpha}}}&  & & TV_\alpha  }
\end{equation} 
Let us state the Hamilton-Jacobi theorem valid for this local picture. 
This Hamilton-Jacobi theorem is a particular case of more general theorem that was proved e.g. in \cite{MDV}. 
\begin{theorem}\label{HJ1}
Consider the local symplectic structure $(U_\alpha ,\omega_\alpha)$. Let $\gamma_{\alpha}: V_\alpha \to U_\alpha$ be a section, whose image space $\operatorname{Im}\gamma_{\alpha}$ is a Lagrangian submanifold of $U_\alpha$. Then the following  conditions are equivalent:
\begin{enumerate}
\item The vector fields $X_{{\alpha}}$ and $X_{{\alpha}}^{\gamma_{\alpha}}$ are $\gamma_{\alpha}$-related,
that is
\begin{equation} \label{hjlocal-}
T\gamma_{\alpha} \circ X_{{\alpha}}^{\gamma_{\alpha}}=X_\alpha\circ\gamma_{\alpha}.
\end{equation}
\item  
\begin{equation}\label{hjlocal}
d (h_\alpha\circ\gamma_\alpha)=0.
\end{equation}
\end{enumerate}
\end{theorem}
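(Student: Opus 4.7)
The strategy is the standard Lagrangian-submanifold argument, applied verbatim to the chart $(U_\alpha,\omega_\alpha)$, which is a genuine symplectic manifold by construction. I would introduce the \emph{defect} vector field along $\gamma_\alpha$,
\[
Y := X_\alpha \circ \gamma_\alpha \;-\; T\gamma_\alpha \circ X_\alpha^{\gamma_\alpha},
\]
and show that $Y\equiv 0$ is equivalent to $d(h_\alpha \circ \gamma_\alpha)=0$. Both implications will then drop out of a single identity pairing $Y$ against vectors tangent to $\operatorname{Im}\gamma_\alpha$.

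First I would record two structural facts about $Y$. On the one hand, $Y$ is vertical with respect to $\pi_\alpha$: applying $T\pi_\alpha$ and using $\pi_\alpha\circ\gamma_\alpha=\mathrm{id}_{V_\alpha}$ together with the definition \eqref{X_H-gamma-loc} gives $T\pi_\alpha\circ Y = X_\alpha^{\gamma_\alpha} - X_\alpha^{\gamma_\alpha}=0$. On the other hand, because $T\pi_\alpha\circ T\gamma_\alpha=\mathrm{id}$, the subspace $T(\operatorname{Im}\gamma_\alpha)$ is the graph of $T\gamma_\alpha$ and therefore intersects $\ker T\pi_\alpha$ only at the zero vector; this transversality is the one point I would verify explicitly.

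The core computation is carried out in one line. For an arbitrary vector field $V$ on $V_\alpha$, the lift $T\gamma_\alpha V$ is tangent to the Lagrangian $\operatorname{Im}\gamma_\alpha$, and since $T\gamma_\alpha\circ X_\alpha^{\gamma_\alpha}$ is likewise tangent to $\operatorname{Im}\gamma_\alpha$, the Lagrangian condition kills $\omega_\alpha(T\gamma_\alpha X_\alpha^{\gamma_\alpha},T\gamma_\alpha V)$. Using the local Hamilton equation \eqref{geohamalpha} for the surviving term, I would obtain
\[
\omega_\alpha\bigl(Y,\,T\gamma_\alpha V\bigr)\circ\gamma_\alpha \;=\; \langle dh_\alpha,\,T\gamma_\alpha V\rangle\circ\gamma_\alpha \;=\; V(h_\alpha\circ\gamma_\alpha) \;=\; d(h_\alpha\circ\gamma_\alpha)(V).
\]

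The equivalence is then immediate from this identity. Direction $(1)\Rightarrow(2)$ is trivial: if $Y=0$, then $d(h_\alpha\circ\gamma_\alpha)(V)=0$ for every $V$. For $(2)\Rightarrow(1)$, the identity says $Y$ is $\omega_\alpha$-orthogonal to $T(\operatorname{Im}\gamma_\alpha)$, so $Y \in T(\operatorname{Im}\gamma_\alpha)^\perp=T(\operatorname{Im}\gamma_\alpha)$ by the Lagrangian hypothesis; combined with the verticality of $Y$ and the transversality noted above, this forces $Y=0$. There is no serious obstacle here — the whole argument lives inside the symplectic chart $(U_\alpha,\omega_\alpha)$, so neither the Lee form $\theta$ nor the Lichnerowicz--deRham differential $d_\theta$ appears; the only genuine point to check carefully is the transversality $T(\operatorname{Im}\gamma_\alpha)\cap\ker T\pi_\alpha=\{0\}$ that allows one to conclude $Y=0$ in the second implication.
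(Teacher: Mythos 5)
Your proof is correct: the verticality of the defect field, the one-line pairing identity, and the transversality $T(\operatorname{Im}\gamma_\alpha)\cap\ker T\pi_\alpha=\{0\}$ together give both implications cleanly, and this is exactly the defect-vector-field argument the paper itself uses to prove the global Theorem~\ref{HJT} (there written as $D=X_h-\gamma_*X_h^\gamma$ with the decomposition $T_{\gamma(q)}T^*Q=T_{\gamma(q)}(\operatorname{Im}\gamma)\oplus \mathrm{V}_{\gamma(q)}\pi_Q$). For Theorem~\ref{HJ1} itself the paper supplies no proof and merely cites \cite{MDV}, so your argument fills that gap by the same route, specialized to the genuinely symplectic chart $(U_\alpha,\omega_\alpha)$ where neither $\theta$ nor $d_\theta$ is needed.
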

\subsection{A global picture for a l.c.s. Hamilton-Jacobi theory}

Now, we want to arrive at the global picture. For this, recall the global definition of l.c.s. manifold $T^*_\theta Q$. Consider a Hamiltonian vector field $X_h$ defined through the equation (\ref{semiglobal3}). Let us consider now a section $\gamma$ of the cotangent bundle and define a vector field on $Q$ as
\begin{equation} \label{X_H-gamma}
 X_{H}^{\gamma}=T\pi\circ X_{H}\circ\gamma.
\end{equation}
One can define the vector field $X_{H}^{\gamma}$ by the commutation of the following diagram
\[
\xymatrix{ T_{\theta}^*Q
\ar[dd]^{\pi_Q} \ar[rrr]^{X_{h}}&   & &TT_{\theta}^*Q  \ar[dd]^{T\pi_Q}\\
  &  & &\\
Q \ar@/^2pc/[uu]^{\gamma}\ar[rrr]^{X_{h}^{\gamma}}&  & & TQ}
\]
We are ready to write the Hamilton-Jacobi theorem for l.c.s. cotangent bundles. 
\begin{theorem}
\label{HJT} Consider a one-form $\gamma$ whose image is a Lagrangian submanifold of the locally conformally symplectic manifold $T^*_\theta Q$ with respect to the almost symplectic two-form $\Omega_\theta$, that is $d_\vartheta \gamma=0$. Then, the following  conditions are equivalent:
\begin{enumerate}
\item The vector fields $X_{h}$ and $X_{h}^{\gamma}$ are $\gamma$-related,
that is
\begin{equation}
T\gamma \circ X_{h}^{\gamma}=X_h\circ\gamma.
\end{equation}
\item  
\[
d_\vartheta (h\circ\gamma)=0.
\]
\end{enumerate}
\end{theorem}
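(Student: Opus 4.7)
The plan is to pull the global Hamilton equation $\iota_{X_h}\Omega_\theta = d_\theta h$ back along the section $\gamma$, and identify the resulting obstruction to $\gamma$-relatedness with the one-form $d_\vartheta(h\circ\gamma)$.

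First I would dispose of the right-hand side. Since $\pi_Q\circ\gamma = \mathrm{id}_Q$ and $\theta = \pi_Q^*\vartheta$, one gets $\gamma^*\theta = \vartheta$. Applying $\gamma^*$ to the identity $d_\theta h = dh - h\theta$ yields
\begin{equation}
\gamma^*(d_\theta h) = d(h\circ\gamma) - (h\circ\gamma)\vartheta = d_\vartheta(h\circ\gamma).
\end{equation}

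Next I would handle the left-hand side by decomposing $X_h$ along $\gamma$. Because $\gamma$ is a section of $\pi_Q$, at each point $\gamma(q)$ the image of $T_q\gamma$ is a horizontal complement of the vertical subspace $\ker T_{\gamma(q)}\pi_Q$. Writing
\begin{equation}
X_h\circ\gamma = T\gamma\circ X_h^\gamma + Y,
\end{equation}
with $Y$ vertical along $\gamma$, I would compute, for any $w\in T_qQ$,
\begin{equation}
\gamma^*(\iota_{X_h}\Omega_\theta)(w) = \Omega_\theta(X_h\circ\gamma,\, T\gamma\cdot w) = (\gamma^*\Omega_\theta)(X_h^\gamma, w) + \Omega_\theta(Y,\, T\gamma\cdot w).
\end{equation}
The Lagrangian hypothesis on $\operatorname{Im}\gamma$, which by (\ref{LagSubT*Q}) is precisely $d_\vartheta\gamma = 0$, kills the first term. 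Combining this with the right-hand side step gives the key identity
\begin{equation}
d_\vartheta(h\circ\gamma)(w) = \Omega_\theta(Y,\, T\gamma\cdot w).
\end{equation}

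From here the equivalence is pointwise linear algebra. Condition (1) is exactly $Y=0$, which immediately forces (2). For the converse, (2) says that $Y$ is $\Omega_\theta$-orthogonal to the Lagrangian subspace $\operatorname{Im} T_q\gamma$; by Lagrangian self-duality this places $Y$ back inside $\operatorname{Im} T_q\gamma$. But $\operatorname{Im} T_q\gamma$ is a horizontal complement to the vertical subspace in which $Y$ lives, forcing $Y=0$ and hence (1). I expect the main subtlety to be keeping the two differentials separate (the Lichnerowicz-deRham differential $d_\vartheta$ on $Q$ built from the downstairs one-form $\vartheta$, versus $d_\theta$ on $T^*Q$ built from $\theta = \pi_Q^*\vartheta$) and checking that the self-duality argument survives the fact that $\Omega_\theta$ is only almost symplectic; this is fine because only the pointwise non-degeneracy is used.
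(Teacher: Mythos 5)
Your proposal is correct and follows essentially the same route as the paper's proof: the identity $d_\vartheta(h\circ\gamma)=\gamma^*(d_\theta h)$, the decomposition of $X_h\circ\gamma$ into a $\gamma$-related part plus a vertical remainder (your $Y$ is the paper's $D$), the use of $\gamma^*\Omega_\theta=0$ to kill the tangential term, and the final linear-algebra step combining the Lagrangian property with the splitting $T_{\gamma(q)}T^*Q=T_{\gamma(q)}(\operatorname{Im}\gamma)\oplus \mathrm{V}_{\gamma(q)}\pi_Q$. Packaging both implications into the single identity $d_\vartheta(h\circ\gamma)(w)=\Omega_\theta(Y,T\gamma\cdot w)$ is a mild streamlining, not a different argument.
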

\begin{proof} $(1)\Rightarrow (2)$: Recall the characterization of Lagrangian submanifolds on $T^{*}_{\theta}Q$ presented in \eqref{LagSubT*Q}. It says that the image of a one-form section $\gamma$ on $Q$ is Lagrangian submanifold if and only if $d_{\vartheta}\gamma=0$. 
Let us first try to examine the second condition:
\begin{equation} \label{gamma-pull}
\begin{split}
d_\vartheta(h\circ \gamma)&=d(h\circ \gamma)-(h\circ \gamma)\vartheta
=d(\gamma^*h)-\gamma^*h\vartheta
\\
&=\gamma^*dh-\gamma^*h\gamma^*\theta=\gamma^*(dh-h\theta)
=\gamma^*(d_\theta h).
\end{split}
\end{equation}
So that, we can rewrite (2) as $\gamma^*(d_\theta h)=0$. 
Assume now that the first condition holds, that is $X_{h}$ and $X_{h}^{\gamma}$ are $\gamma$-related, or equivalently $X_h=\gamma_*X_h^\gamma$ over the image space of $\gamma$. Then, we have 
\begin{equation}
\gamma^*(d_\theta h)=\gamma^*(\iota_{X_h}\Omega_\alpha)=\iota_{X_h^\gamma}\gamma^*\Omega_\alpha=0,
\end{equation}
where we have substituted the definition in (\ref{semiglobal3}) in the first equality.
 
\noindent $(2)\Rightarrow (1)$:
Conversely, assume that the second condition holds. To prove that the first condition, we define a vector field $D=X_h-\gamma_*X_h^\gamma$. If the vector field $D$ is identically zero, we have proven the theorem. To show this, we first need to see that $D$ is a vertical vector field over the image space of $\gamma$ with respect to the cotangent bundle projection $\pi_Q$, that is
\begin{eqnarray*}
T\pi_Q\circ D\circ \gamma &=&T\pi_Q (X_h-\gamma_*X_h^\gamma)\circ \gamma
\\&=&T\pi_Q  \circ X_h \circ \gamma - T\pi_Q \circ T\gamma \circ X_h^\Lambda
\\
&=&T\pi_Q  \circ X_h \circ \gamma -T\pi_Q \circ T\gamma \circ T\pi_Q \circ 
X_h \circ \gamma
\\
&=&T\pi_Q  \circ X_h \circ \gamma -T\pi_Q \circ 
X_h \circ \gamma=0.
\end{eqnarray*}
Now see that $D=X_h-\gamma_*X_h^\gamma$ vanishes identically on the image space of $\gamma$ for the vector fields in the form $\gamma_*Y$ for any vector field $Y$ on $Q$. 
\begin{eqnarray*}
\Omega_\alpha(X_h-\gamma_*X_h^\gamma,\gamma_*Y)&=&\Omega_\alpha(X_h,\gamma_*Y)-\Omega_\alpha(\gamma_*X_h^\gamma,\gamma_*Y)
\\
&=& \iota_{X_h}\Omega_\alpha(\gamma_*Y)-\gamma^*(\Omega_\alpha)(X_h^\gamma,Y)
\\
&=& \gamma^*(\iota_{X_h}\Omega_\alpha)(Y)-\gamma^*(\Omega_\alpha)(X_h^\gamma,Y)=0
\end{eqnarray*}
where we have employed the second condition for the first term in the last line, whereas we employed the Lagrangian submanifold property of $\gamma$  in the second term of the last line.
Notice also that we have the following decomposition of the vector spaces
\begin{equation}
T_{\gamma(q)} T^{*} Q=T_{\gamma(q)}(\operatorname{Im} \gamma) \oplus \mathrm{V}_{\gamma(q)} \pi_{Q}.
\end{equation}
$\mathrm{V}_{\gamma(q)} \pi_{Q}$ is a Lagrangian subspace so that the proof now follows the non-degeneracy of $\Omega_\theta$.
\end{proof}
\subsection{Comparison of local and global pictures} \label{comparison}

We have exhibited two HJ theorems to describe Hamiltonian dynamics on l.c.s. manifolds. One theorem is concerned with the local description of dynamics, whilst the other provides a global picture of such dynamics. Now, we would like to establish the link between these theories, namely, we need to show that restricting to the local charts of the HJ Theorem \ref{HJT}, one can retrieve the HJ Theorem \ref{HJ1}. For this, consider the l.c.s. manifold $(T^*_\theta Q,\Omega_\theta)$ with Lee-form $\theta$. Notice that, while referring to the l.c.s. structures on the cotangent bundles, we denote the l.c.s. two-form by $\Omega_\theta$ instead of $\omega$. On the other hand, we will still make use of the same notation in local coordinates, i.e., on a local chart $U_\alpha$, $\omega\vert_\alpha$ we denote the restriction of the almost symplectic two-form by $\Omega_\theta$.  

If the image spaces of $\gamma_\alpha$'s are Lagrangian submanifolds of the symplectic manifolds $(U_\alpha,\omega_\alpha)$ then $\gamma_\alpha^*\omega_\alpha=0$. By substitution of the local realization of the almost symplectic form $\omega\vert_\alpha=e^{\sigma_\alpha}\omega_\alpha$ into this identity, we see that the image space of $\gamma_\alpha$ is also a Lagrangian submanifold of the almost symplectic manifold $(U_\alpha,\omega\vert_\alpha)$. That is $\gamma_\alpha^*(\omega\vert_\alpha)=0$. By gluing up the local sections $\gamma_\alpha$, we arrive at a one-form $\gamma$ on $Q$,  whose image is a Lagrangian submanifold of $T_\theta^*Q$. The Poincar\'e's lemma states that $\gamma$ must be closed. Inversely, if the image of $\gamma$ is a Lagrangian submanifold of $T_\theta^*Q$, then, in each local chart, the image space of its restriction $\gamma_\alpha$ is a Lagrangian submanifold of $(U_\alpha,\omega\vert_\alpha)$. So, it is also a Lagrangian submanifold of $(U_\alpha,\omega_\alpha)$, since they differ only by a non-vanishing positive function. Therefore, we have proven that the assumptions in Theorems     (\ref{HJ1}) and (\ref{HJT}) are equivalent. Let us give further details.

Start with the first condition. On a chart $U_\alpha$, the local picture of $X_h$, $\gamma$ and $\pi_Q$ are given by $X_h|_\alpha=X_\alpha$, $\gamma|_\alpha=\gamma_\alpha$, and $\pi_Q\vert_\alpha=\pi_\alpha$, respectively. This shows that the local realization of $X_h^\gamma$ in (\ref{X_H-gamma}) given by $X_h^\gamma\vert_\alpha$ equals the local vector field $X^{\gamma_\alpha}_\alpha$ in (\ref{X_H-gamma-loc}), that is 
\begin{equation}
X_{h}^{\gamma}\vert_\alpha=(T\pi_Q\circ X_{h}\circ\gamma)\vert_\alpha= T\pi_\alpha\circ X_{\alpha}\circ\gamma_\alpha =X_\alpha^{\gamma_\alpha}.
\end{equation}   
It is easy to see now that the first conditions in both theorems are equivalent. For the second condition in both theorems, recall the identity $d_\vartheta(h\circ \gamma)=\gamma^*(d_\theta h)$ given in (\ref{gamma-pull}). Accordingly, we compute 
\begin{eqnarray*}
0=d_\vartheta (h\circ \gamma)&=&\gamma^*(d_\theta h)=\gamma^*\Big(d h  -  h\theta \Big) = \gamma_\alpha^*(d(e^{\sigma_\alpha} h_{\alpha})  -  (e^{\sigma_\alpha}h_\alpha)d\sigma_\alpha )  
\\ &=& {\gamma_\alpha}^*\Big(d(e^{\sigma_\alpha}h_\alpha)  -  (e^{\sigma_\alpha}h_\alpha)d\sigma_\alpha \Big) \\ &=& {\gamma_\alpha}^*\Big( h_\alpha e^{\sigma_\alpha}d\sigma_\alpha +e^{\sigma_\alpha}d h_\alpha   -  (e^{\sigma_\alpha}h_\alpha)d\sigma_\alpha \Big) \\ &=& {\gamma_\alpha}^*\Big( e^{\sigma_\alpha}d h_\alpha  \Big)= (e^{\sigma_\alpha\circ\gamma_\alpha}  ) d( h_\alpha\circ\gamma_\alpha)   
\end{eqnarray*}
that proves the equivalency by a local conformal factor.

\section{An example} \label{Exp}

Consider two dimensional punctured Euclidean space $Q=\mathbf{R}^2-\{0\}$ where $0$ is the origin. In order to define a l.c.s. manifold structure on the cotangent bundle as described in Subsection \ref{Sec-lcs-cot}, we first introduce the following closed one-form 
\begin{equation} \label{Lee-ex}
\vartheta=2\frac{xdy-ydx}{x^2+y^2}
\end{equation}
which fails to be exact on the whole $Q$. Introduce the Darboux' coordinates $(x,y,p_x,p_y)$ on the cotangent bundle $T^*Q$ and, by referring to the definition (\ref{omega_theta}), consider the following non-degenerate two-form 
\begin{equation} \label{Ex-Omega}
\Omega_\theta=\Omega_Q+\theta\wedge \theta_Q=dx\wedge dp_x+dy\wedge dp_y - 2\frac{yp_y+xp_x}{x^2+y^2}dx\wedge dy.
\end{equation}
Here, the one-form $\theta$ is the pull-back of the one-form $\vartheta$ in \eqref{Lee-ex} by means of the cotangent bundle projection $\pi_Q$ and, in the present setting, it looks like the same with $\vartheta$.
After a direct calculation it is immediate to observe that $\Omega_\theta$ is a locally conformal symplectic two-form with the Lee form $\theta$. Later, we introduce the following quadratic Hamiltonian function 
\begin{equation} \label{Ex-ham-func}
h=\frac{1}{2}(p_x^2+p_y^2)
\end{equation}
on the cotangent bundle $T^*Q$. Recall the Hamilton's equation \eqref{semiglobal3} in the l.c.s. framework defined in terms of the Lichnerowicz-deRham differential $d_\vartheta$.  
For the present case, the Hamilton's equation takes the following particular form
\begin{equation}
\dot{x}=p_x,\quad \dot{y}=p_y,\quad \dot{p}_x=\frac{-yp^2_x-2yp^2_y-xp_xp_y}{x^2+y^2}, \quad \dot{p}_y=\frac{-xp^2_y-2xp^2_x -yp_xp_y}{x^2+y^2}.
\end{equation}

We write this system in a local coordinate chart. For this end, choose the polar coordinates $x=r\cos\phi$, $x=r\sin\phi$ on an open chart $V_\alpha$ in $Q$. In these coordinates, the one-form $\vartheta$ in \eqref{Lee-ex} turns out to be an exact one-form $2d\phi$. We consider the induced local coordinates $(r,\phi,p_r,p_\phi)$ on $U_\alpha$. In this realization, the Lee form is computed to be $\theta=2d\phi$. This gives that the local function, determining the conformal character of the system, is $\sigma_\alpha=2\phi$. The two-form exhibited in (\ref{Ex-Omega}) reduces to
\begin{equation}
\Omega_\theta\vert_\alpha=dr\wedge dp_r+d\phi\wedge dp_\phi-2p_rdr\wedge d\phi.
\end{equation}
This two-form is not closed but the following one, which is defined according to the formula in (\ref{ua}),
 \begin{equation}
\Omega_\alpha=e^{-\sigma_\alpha}\Omega_\theta\vert_\alpha=e^{-2\phi} \Omega_\theta\vert_\alpha=e^{-2\phi}\Big(dr\wedge dp_r+d\phi\wedge dp_\phi-2p_rdr\wedge d\phi\Big)
\end{equation}
is closed. Notice that, for the symplectic two-form $\Omega_\alpha$, the canonical coordinates can be determined by  $\bar{p}_r=e^{-2\phi}p_r$, $\bar{p}_\phi=e^{-2\phi}p_\phi$ where the base coordinates remain the same. Nevertheless, we insist to use non-canonical polar coordinates on symplectic pair $(U_\alpha,\Omega_\alpha)$. In this coordinate system, the Hamiltonian function (\ref{Ex-ham-func}) is written as 
\begin{equation}
h\vert_\alpha=\frac{1}{2}\left( p_{r}^2+\frac{1}{r^2}p^2_\phi \right)
\end{equation}
and then, according to (\ref{glueHamFunc}), we define a local function 
\begin{equation}
h_\alpha=e^{-\sigma_\alpha}h\vert_\alpha=\frac{1}{2}e^{-2\phi}\left( p_{r}^2+\frac{1}{r^2}p^2_\phi \right)
\end{equation}
on the local symplectic manifold $U_\alpha$. The Hamiltonian dynamics generated by the Hamiltonian function $h_\alpha$, according to the local Hamilton's equation \eqref{geohamalpha},
is computed to be
\begin{equation}
\dot{r}=p_r, \quad \dot{\phi}=\frac{1}{r^2}p_\phi, \quad \dot{p_r}=\frac{1}{r^2}p_\phi(2 p_r+ \frac{1}{r}p_\phi), \quad \dot{p_\phi}=-p_{r}^2+\frac{1}{r^2}p^2_\phi.
\end{equation}

Let us apply now the local HJ theorem \ref{HJ1} for the present setting. To this end, we start with a section 
\begin{equation}
\gamma_\alpha=\xi(r,\phi)dr+\eta(r,\phi)d\phi
\end{equation}
of the local bundle. We ask that the image space of $\gamma_\alpha$ be a Lagrangian submanifold of the local symplectic manifold $(U_\alpha,\Omega_\alpha)$ that is $\gamma^*\Omega_\alpha$ vanishes identically. This determines the following condition on the coefficient functions  
\begin{equation}\label{lagsub2}
\frac{\partial\xi}{\partial\phi}- \frac{\partial\eta}{\partial r}-2\xi =0
\end{equation}
of $\gamma$. The second condition $\d (h_\alpha\circ \gamma_\alpha)=0$ in the local HJ theorem \ref{HJ1} gives the following HJ equations
\begin{equation} \label{Ex-locHJ}
\begin{split}
 \xi\frac{\partial\xi}{\partial\phi}+\frac{1}{r^2} \eta\frac{\partial\eta}{\partial\phi}&=\xi^2+\frac{1}{r^2}\eta^2, \\  \xi\frac{\partial\xi}{\partial r}+ \frac{1}{r^2}\eta\frac{\partial\eta}{\partial r} &=\frac{1}{r^3}\eta^2.
\end{split}
\end{equation}
As stated in Theorem \ref{HJ1}, we can alternatively arrive at this system of equations by referring the identity presented in (\ref{hjlocal-}). For this, we first compute the right hand side of the equality (\ref{hjlocal-}) that is
\begin{equation}
X_\alpha^{\gamma_\alpha}=  \xi\partial_{r}+\eta\frac{1}{r^2}\partial_{\phi}   
\end{equation}
and then the left hand side of the identity
\begin{equation}
T{\gamma_\alpha} X_\alpha^{\gamma_\alpha}=   \xi\partial_{r}+\eta\frac{1}{r^2}\partial_{\phi} + \Big( \frac{\partial\xi}{\partial r}\xi+ \frac{\partial\xi}{\partial \phi}\eta \frac{1}{r^2} \Big)\partial_{p_r}  + \Big(   \frac{\partial\eta}{\partial r}\xi  +  \frac{\partial\eta}{\partial \phi}\eta \frac{1}{r^2} \Big)\partial_{p_\phi}.
\end{equation}
Now, it is a matter of a direct calculation to arrive at the system (\ref{Ex-locHJ}).

Let us now apply the global HJ Theorem \ref{HJT} to the present example. Start with a one-form 
\begin{equation}
\gamma=\beta(x,y)dx+\rho(x,y)dy
\end{equation}
satisfying $d_\vartheta \gamma=0$ in order to guarantee that the image space of $\gamma$ is a Lagrangian submanifold of the l.c.s. manifold $T^*_\theta Q$ with the two-form $\Omega_\theta$ in (\ref{Ex-Omega}). That is we have 
\begin{equation}
\frac{\partial \rho}{\partial x} - \frac{\partial \beta}{\partial y} + \frac{2x\beta-2y\rho}{x^2+y^2}=0.
 \end{equation}
Referring to the second condition in Theorem \ref{HJT}, we write HJ equation as 
\begin{equation}
d_\vartheta(h\circ \gamma)=\frac{1}{2}d( \beta^2+\rho^2)-(\beta^2+\rho^2)\left( \frac{xdy-ydx}{x^2+y^2}\right)=0.
\end{equation}
Explicitly, we compute the following system of equations
\begin{equation}
\begin{split}
\beta\frac{\partial \beta}{\partial x} + \rho\frac{\partial \rho}{\partial x} + ( \beta^2+\rho^2)\frac{y}{x^2+y^2}&=0,
\\
\beta\frac{\partial \beta}{\partial y} + \rho\frac{\partial \rho}{\partial y} - ( \beta^2+\rho^2)\frac{x}{x^2+y^2}&=0.
\end{split}
\end{equation}

\section{Complete solutions} \label{compsolu}
A complete solution for a HJ theory for a Hamiltonian system on a l.c.s. manifold $(T_{\theta}^{*}Q,\Omega_{\theta})$ is a diffeomorphism 
$$\Phi : Q \times \mathbb{R}^n \longrightarrow T^*_ \theta Q$$
  such that for a set of parameters $\lambda=(\lambda_1,\dots,\lambda_n)$ in $\mathbb{R}^n$, the section 
  $\Phi_\lambda$ from $Q$ to $T^*_\theta Q$ is a solution for the Hamilton-Jacobi problem for a given Hamiltonian function $h$. That is $\Phi_{\lambda}$ satisfies these two conditions:
\begin{enumerate}
    \item $d_{\theta}\Phi_{\lambda}=0$, i.e, the image of $\Phi_{\lambda}$ is Lagrangian submanifold with respect to $\Omega_\theta$.
    \item $d_\vartheta(h\circ \Phi_{\lambda})=0.
    $
\end{enumerate}
\bigskip

\noindent Consider the following commutative diagram
\begin{equation}
\xymatrix{ Q\times \mathbb{R}^n
\ar[d]^{\nu} \ar[rr]^{\Phi}&    &T^{*}_{\theta} Q\ar[d]^{f_i}
\\ \mathbb{R}^n\ar[rr]^{\nu_i} && \mathbb{R} }
\end{equation}
\noindent
where $\nu$ is the canonical projection on the second factor, $\nu_i$ is the projection into the $i$th-factor and $f_i$ is the function defined by 
\begin{equation}
f_i =\nu_i\circ\nu\circ \Phi^{-1}. 
\end{equation}
\begin{theorem}
If $\Phi$ is a complete solution of the Hamilton-Jacobi problem on $T_\theta^*Q,$ then, the functions defined as $f_i$ commute with respect to the Jacobi bracket, say,
\begin{equation}
    \{f_i,f_j\}=0,\qquad \forall i,j.
\end{equation}
\end{theorem}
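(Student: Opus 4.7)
My plan is to reduce the statement to the classical symplectic fact that the component functions of a Lagrangian foliation Poisson-commute, and then translate that local conclusion into the global Jacobi bracket on $T^*_\theta Q$ via the conformal rescaling identity \eqref{locJac}. First I would observe that, because $\Phi$ is a diffeomorphism, the images $L_\lambda := \Phi_\lambda(Q)$ for $\lambda\in\mathbb{R}^n$ foliate $T^*_\theta Q$, and each $L_\lambda$ is a Lagrangian submanifold of $(T^*_\theta Q,\Omega_\theta)$ by the first defining property of a complete solution. Moreover, the identity $f_i\circ\Phi(q,\lambda)=\lambda_i$ shows that $L_\lambda$ is precisely the joint level set $\{f_1=\lambda_1,\dots,f_n=\lambda_n\}$, so the $f_i$'s are functionally independent and their common level sets are exactly the Lagrangian leaves.

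Next, I would pass to an arbitrary chart $U_\alpha$ of the kind used in Subsections~\ref{blcs} and \ref{comparison}, on which the local symplectic form $\omega_\alpha = e^{-\sigma_\alpha}\Omega_\theta\vert_\alpha$ is defined. As noted in Subsection~\ref{comparison}, a submanifold of $U_\alpha$ is Lagrangian with respect to $\Omega_\theta\vert_\alpha$ if and only if it is Lagrangian with respect to $\omega_\alpha$, since the two forms differ by the nowhere-vanishing positive scalar $e^{-\sigma_\alpha}$. Hence the restrictions $\Phi_\lambda\vert_\alpha(V_\alpha)$ form a Lagrangian foliation of the genuinely symplectic manifold $(U_\alpha,\omega_\alpha)$ whose leaves are the joint level sets of the local restrictions $f_i\vert_\alpha$.

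On the symplectic chart $(U_\alpha,\omega_\alpha)$ I would apply the standard Lagrangian-foliation argument: at any point $p\in L_\lambda\cap U_\alpha$ the differential $df_i\vert_\alpha$ annihilates $T_pL_\lambda$, so $\omega_\alpha^\sharp(df_i\vert_\alpha)(p)$ lies in $T_pL_\lambda$ by the Lagrangian characterization $\omega_\alpha^\sharp((T_pL_\lambda)^\circ)=T_pL_\lambda$, and since $T_pL_\lambda$ is $\omega_\alpha$-isotropic one obtains $\{f_i\vert_\alpha,f_j\vert_\alpha\}_\alpha(p) = 0$ for the canonical Poisson bracket of $\omega_\alpha$. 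Invoking the local-to-global formula \eqref{locJac} then yields
\[
\{f_i,f_j\}\vert_\alpha = e^{-\sigma_\alpha}\{f_i\vert_\alpha,f_j\vert_\alpha\}_\alpha = 0
\]
on each chart $U_\alpha$; since $\{U_\alpha\}$ covers $T^*_\theta Q$, the Jacobi bracket $\{f_i,f_j\}$ vanishes identically.

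The main conceptual obstacle to be aware of is that the Jacobi Hamiltonian vector field $X_{f_i}=\omega^\sharp(df_i)+f_iZ_\theta$ is generally \emph{not} tangent to the Lagrangian leaves, because the Lee vector field $Z_\theta$ contributes a transverse component; a naive direct computation of $\Omega_\theta(X_{f_i},X_{f_j})$ on a leaf leaves residual terms of the form $f_iZ_\theta(f_j)-f_jZ_\theta(f_i)$ which are not manifestly zero. The strategy above avoids this obstruction by passing to the conformally rescaled symplectic structure $\omega_\alpha$, on which the Lee vector field no longer enters the bracket. It is also worth noting that the second defining property $d_\vartheta(h\circ\Phi_\lambda)=0$ of a complete solution plays no role in the argument, just as in the classical symplectic analogue.
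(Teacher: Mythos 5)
Your reduction to the local symplectic charts is sound up to and including the conclusion that the $\omega_\alpha$-Poisson brackets of the restrictions vanish: the leaves $\operatorname{Im}\Phi_\lambda$ are Lagrangian for $\omega_\alpha$ as well as for $\Omega_\theta\vert_\alpha$, the $f_i$ are leaf-constant, and the standard Lagrangian-foliation argument gives $\{f_i\vert_\alpha,f_j\vert_\alpha\}_\alpha=0$. This is in substance the same fact the paper proves globally via the lemma $TL^{\bot}=\sharp_\Lambda(TL^\circ)$, namely that $\sharp_\Lambda(df_i)$ is tangent to the leaves, i.e.\ $\Lambda(df_i,df_j)=0$. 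The gap is in your final transfer step. From $\omega_\alpha=e^{-\sigma_\alpha}\omega$ one gets $\omega_\alpha^\sharp=e^{\sigma_\alpha}\omega^\sharp$, hence the Poisson bivector of $\omega_\alpha$ is $P_\alpha=e^{\sigma_\alpha}\Lambda$, and therefore
\begin{equation*}
e^{-\sigma_\alpha}\{f_i\vert_\alpha,f_j\vert_\alpha\}_\alpha=e^{-\sigma_\alpha}P_\alpha(df_i,df_j)=\Lambda(df_i,df_j),
\end{equation*}
which is only the bivector part of the Jacobi bracket \eqref{JB}; the terms $f_iZ_\theta(f_j)-f_jZ_\theta(f_i)$ are not generated by this rescaling. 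The conformal localization that genuinely reproduces \eqref{JB} rescales the \emph{arguments}, $\{f,g\}=e^{\sigma_\alpha}\{e^{-\sigma_\alpha}f,e^{-\sigma_\alpha}g\}_\alpha$, and the functions $e^{-\sigma_\alpha}f_i\vert_\alpha$ are no longer constant on the leaves, so the classical argument does not apply to them. In other words, the obstruction you correctly identify in your last paragraph is not removed by passing to $\omega_\alpha$; it is only hidden in the mismatch between \eqref{locJac} as written and \eqref{JB}.

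Those residual terms are moreover genuinely nonzero in general: $Z_\theta$ is tangent to $\operatorname{Im}\Phi_\lambda$ only if $\theta$ annihilates $T(\operatorname{Im}\Phi_\lambda)$, which for the image of a section of $\pi_Q$ forces $\vartheta=0$. Concretely, on $T^*\mathbb{R}^2$ with coordinates $(x,y,p_x,p_y)$ and $\vartheta=dx$, the family $\Phi_\lambda=\lambda_1\,dx+\lambda_2e^{x}\,dy$ consists of $d_\vartheta$-closed sections and gives $f_1=p_x$, $f_2=p_ye^{-x}$; one computes $Z_\theta=-\partial_{p_x}$, $\Lambda(df_1,df_2)=0$, but $f_1Z_\theta(f_2)-f_2Z_\theta(f_1)=f_2$, which is not identically zero. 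So your argument, as it stands, proves $\Lambda(df_i,df_j)=0$ but not $\{f_i,f_j\}=0$. You should be aware that the paper's own proof terminates at essentially the same point: it establishes $\sharp_\Lambda(df_i)(f_j)=0$ and then identifies this quantity with the full bracket $\{f_i,f_j\}$, which by \eqref{JB} again amounts to dropping the $Z_\theta$-terms. A complete argument would need either an additional hypothesis forcing $f_iZ_\theta(f_j)-f_jZ_\theta(f_i)$ to vanish on the leaves, or a reformulation of the conclusion in terms of $\Lambda$ alone.
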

\begin{proof}
 It is clear that
 \begin{equation}
  \operatorname{Im}\Phi_{\lambda}=\bigcap_{i=1}^n f_i^{-1}(\lambda_i).
 \end{equation}
  Therefore, if an element $z$ is in $\operatorname{Im}\Phi_{\lambda}$  then   $z=\Phi_{\lambda}(q)$  for a point $q\in Q$ and  we have
 \begin{equation*}
  f_i(\Phi_{\lambda}(q))=f_i(\Phi(q,\lambda))=\lambda_i.
 \end{equation*}
 This means that $f_i$ is constant on $\operatorname{Im}\Phi_{\lambda}$ and therefore $df_i$ vanishes on $T(\operatorname{Im}\Phi_{\lambda})$. 
 But $\Phi_{\lambda}$ is a solution of the Hamilton--Jacobi equation, then  $\operatorname{Im}\Phi_{\lambda}$ is a Lagrangian submanifold and we have that
 \begin{equation*}
  {\sharp}_{\Lambda} T(\operatorname{Im}\Phi_{\lambda})^{\circ}=T(\operatorname{Im}\Phi_{\lambda})
 \end{equation*}
 that implies ${\sharp}_{\Lambda}(df_i)(f_j)=0$ for all $i,j$. But
 ${\sharp}_{\Lambda}(df_i)=X_{f_i}-f_iZ_{\theta}$, so that
 \begin{equation}
 0={\sharp}_{\Lambda}(df_i)(f_j)=X_{f_i}(f_j)-f_iZ_\theta(f_j)=\{f_i,f_j\}.
 \end{equation}
\end{proof}

\section{Discussion, and Future Works}

In this work, we have presented Hamilton-Jacobi theory for the Hamiltonian dynamics on l.c.s. manifolds in both the local and the global frameworks in Theorems \ref{HJ1} and \ref{HJT}, respectively.   We have, theoretically, exhibited the passages between these two realizations in Subsection \ref{comparison}. We have discussed all these geometries on a concrete example in Section \ref{Exp}. Additionally, we have presented complete solutions of the Hamilton-Jacobi problem in Section \ref{compsolu}. We plan to pursue investigations on Hamilton-Jacobi theories for locally conformal settings in the following headlines:
\begin{itemize} 
\item $k$-symplectic manifolds are generalizations of 
symplectic manifolds that are appropriate for field 
theories \cite{Awane}. There is Hamilton-Jacobi theory 
for the dynamics on $k$-symplectic manifolds 
\cite{LDMSV,LSV}. 
It looks interesting to investigate the locally 
conformal setting for the case of $k$-symplectic 
geometry. We also wish to study the Hamiltonian dynamics and the Hamilton-Jacobi theory on this framework.

\item Reduction of the Hamiltonian dynamics 
on l.c.s. manifolds under a group of symmetries
has been studied in the literature \cite{HaRy01,IbLeMa97,St19} whereas reduction of the Hamilton-Jacobi formalism under symmetry has been exhibited in \cite{LDV}. It could be interesting to merge to these two reduction procedures to explore possible reductions of the Hamilton-Jacobi formalisms for l.c.s. geometry given in this paper under Lie group actions.
 
\end{itemize}

\section{Acknowledgments}
This work has been partially supported by MINECO grants MTM2016-76-072-P and the ICMAT Severo Ochoa Project SEV-2011-0087 and SEV-2015-0554.

\section*{References}

\end{document}